\theoremstyle{plain}
\newtheorem*{theorem}{Theorem}
\newtheorem*{definition}{Definition}
\newtheorem{lemma}{Lemma}
\newtheorem{corollary}{Corollary}
\newcommand{\llaabel}[1]{\label{#1}}
\begin{document}


\title{Correspondence between sonic points of ideal photon gas accretion and photon spheres}


\author{Yasutaka Koga}
\author{Tomohiro Harada}
\affiliation{Department of Physics, Rikkyo University, Toshima, Tokyo 171-8501, Japan}


\date{\today}

\begin{abstract}
In the accretion flow of fluid, its velocity may transit from subsonic to supersonic.
The point at which such transition occurs is called sonic point and often mathematically special.
We consider the steady-state and spherically symmetric accretion
 problem of ideal photon gas in general static spherically symmetric spacetimes neglecting back reaction.
Our main result is that the equation of state (EOS) of ideal photon gas leads to
 correspondence between its sonic point and the photon sphere of the
 spacetime in general situations. Moreover, we also show that
in spite of the dependence of the EOS on the dimension of spacetime,
this correspondence holds for spacetimes of arbitrary dimensions.
\end{abstract}

\pacs{04.20.-q, 04.40.Nr, 98.35.Mp}

\maketitle


\section{Introduction}
\llaabel{sec:introduction}
A photon sphere is a sphere on a spacetime at which circular null geodesics exist.
In astrophysical cases, black holes usually have photon spheres near the horizon.
This structure is the characteristics of strong gravitational fields and helps us to identify black holes in the Universe by optical observations through its gravitational lensing.
The size of the shadow of the hole is determined by the radius of its photon sphere. 
In the case of the Schwarzschild black hole for example, we can see their relation from the calculation  by Synge~\cite{synge}.
\par
The accretion of fluid onto objects is a basic
problem in astrophysics and
the most important issue concerning growth of stars and black holes.
In an observational view point, the accretion is considered to be responsible
for the X-ray emission due to the compression of the fluid.
This is also connected to the observations of strong gravity fields in a general relativistic context.
\par
The first study of the accretion onto stars was established by Bondi~\cite{bondi}.
He investigated stationary spherically symmetric flow of polytropic fluid in Newtonian gravity.
One of the interesting features is the existence of a critical point 
(or sonic point) and transonic flow,
that is, flow which experiences transition between subsonic and supersonic states.
Michel extended the problem to general relativity on the 
Schwarzschild spacetime
under the assumption that the spacetime is not so strongly modified by the fluid and
also estimated several quantities on the critical point~\cite{michel}.
For the (anti-)de Sitter spacetime, Mach, Malec and Karkowski gave not only numerical calculations with a 
polytropic equation of state (EOS), but also the exact solutions of the accretion of fluid with isothermal EOSs~\cite{mach}.
For general static spherically symmetric spacetimes and polytropic EOSs, the existence of the unique solution of the accretion problem has been proved
by Chaverra and Sarbach~\cite{chaverra}.
They analyzed the problem by the 
method of dynamical systems.
In the analysis of the outflow of fluid, there exist the same features,
i.e., a transonic flow and a sonic point as in the accretion problem.
Carter, Gibbons, Lin and Perry discussed the treatment
of Hawking radiation from astrophysical black holes 
as the outflow of perfect fluid~\cite{Carter}.
\par
In the study by Mach et. al~\cite{mach}, it was revealed that only for the case of the accretion of radiation fluid, 
the radius of the sonic point is $3M$.
This radius coincides with the photon sphere of the spacetime.
This correspondence connects between two independent observations,
the observation of lights from sources behind a black hole and the observation of emission from
accreted radiation fluid onto the hole, because the size of the shadow of the hole is determined by the radius of the photon sphere and the accreted fluid can signal the sonic point.
\par
In this paper, we see there exists the correspondence between the
sonic points of photon gas accretion and the photon spheres in large class by generalizing the analysis~\cite{chaverra} to arbitrary dimensions.
In fact, we consider general static spherically symmetric spacetimes in $D$ dimensions
\begin{equation}
\llaabel{eq:metric}
ds^2=-f(r)dt^2+g(r)dr^2+r^2d\Omega^2_{D-2},
\end{equation}
where $D\ge 3$ and the condition $0<f,g<\infty$ are assumed and $d\Omega^2_{D-2}$ is the unit $(D-2)$-sphere metric.
The main result is:
\begin{theorem}\llaabel{theorem:main}
For any physical transonic accretion flow of ideal
 photon gas in stationary and spherically symmetric state on the fixed
 background spacetime (\ref{eq:metric}), the
radius of its sonic point coincides with that of (one of) the unstable photon sphere(s) of the geometry.
\end{theorem}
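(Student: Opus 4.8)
The plan is to reduce the stationary accretion problem to two first integrals and then locate the sonic point as the regularity point of the resulting dynamical system, exactly as in Ref.~\cite{chaverra} but carried out in $D$ dimensions with the photon-gas EOS. First I would model the photon gas as a perfect fluid, $T^{\mu\nu}=(\rho+p)u^\mu u^\nu+pg^{\mu\nu}$, and take the flow to be radial, stationary and adiabatic, so $u^\mu=(u^t,u^r,0,\dots)$ with $-f(u^t)^2+g(u^r)^2=-1$. Two conserved currents then control the problem: the entropy flux $su^\mu$, giving $s\,u^r r^{D-2}\sqrt{fg}=\mathrm{const}$, and the projection of $\nabla_\mu T^{\mu\nu}=0$ along the static Killing field $\partial_t$, giving the relativistic Bernoulli integral $h\,u_t=\mathrm{const}$ with the specific enthalpy defined by $d\ln h=dp/(\rho+p)$. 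Since photon number is not conserved, entropy plays the role of the conserved scalar here.

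Next I would insert the ideal-photon-gas EOS, which in $D$ spacetime dimensions is the radiation law $p=\rho/(D-1)$. From it one gets $s\propto\rho^{(D-1)/D}$, $h\propto\rho^{1/D}$, and, crucially, a position-independent sound speed $c_s^2=dp/d\rho=1/(D-1)$. Trading $u^r$ for the velocity $v$ seen by static observers via $g(u^r)^2=v^2/(1-v^2)$, both first integrals become relations among $v$, $\rho$, $r$ and $f$ alone, with the metric function $g$ cancelling identically. Taking logarithmic $r$-derivatives of the two integrals yields a $2\times2$ linear system for $\rho'/\rho$ and $v'$; eliminating $\rho'$ produces a single equation of the schematic form $[(D-1)v^2-1]\,v'\propto (D-2)\,[\,1/r-f'/(2f)\,]$, whose left coefficient vanishes at a sonic point. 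Demanding a smooth (finite $v'$) transonic branch forces both sides to vanish simultaneously, giving $v^2=c_s^2=1/(D-1)$ together with the radius condition $rf'=2f$.

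I would then establish the geometric side independently. For null geodesics the radial equation reads $g\dot r^2=E^2/f-L^2/r^2$, so circular photon orbits sit at the extrema of the potential $W(r)=f/r^2$, and $W'=0$ is precisely $rf'=2f$. Thus the sonic-radius condition coincides identically with the photon-sphere condition, and, since $g$ has dropped out of both, the correspondence is insensitive to the metric function $g$. This already matches the two radii; the remaining task is to decide which photon sphere is selected.

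The delicate step, and the one I expect to be the main obstacle, is the stability matching. The theorem asserts the sonic point lands on an unstable photon sphere, i.e.\ a local maximum of $W=f/r^2$ with $W''<0$. I would therefore push the critical-point analysis to second order: the sign of $W''$ at $r_c$ should control the type of the fixed point of the accretion flow, a saddle (an X-point through which a genuine transonic solution passes) versus a center (encircled by subsonic solutions carrying no smooth transonic branch). I expect to show that a transonic accretion solution exists exactly when $r_c$ is a maximum of $W$, so the physically realized sonic point necessarily coincides with an unstable photon sphere, while stable photon spheres correspond to centers admitting no transonic flow. Making this classification rigorous for general $f,g$ and arbitrary $D$---including controlling the existence and global behavior of the transonic branch in the spirit of Ref.~\cite{chaverra}---is the part that requires real care beyond the first-order algebraic matching.
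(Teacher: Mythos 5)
Your proposal follows essentially the same route as the paper: two first integrals (your entropy flux is equivalent to the paper's particle flux $j_n$ since the flow is adiabatic), the constant photon-gas sound speed $v_s^2=1/(D-1)$ forcing the regularity/critical-point condition $rf'=2f$, and the identification of this with the extremum condition for $f/r^2$ governing circular null orbits. The stability step you defer is exactly what the paper completes by an explicit computation of the linearization determinant at the critical point, which shows the saddle condition reduces to $(fr^{-2})''<0$ --- precisely the instability condition for the photon sphere --- together with the observation that a physical (single-valued, finite $dn/dr$) transonic orbit can only cross the sonic surface at a saddle, so your anticipated classification does go through.
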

The sonic point is a point at which transition between supersonic and subsonic states occurs.
The term unstable photon sphere means the instability of the corresponding circular orbits of null geodesics.
The rigorous definitions of ``physical flow'' and the other terms will be given in the following sections.
\par
In Sec.~\ref{sec:photonsphere}, we derive the conditions for the radius of photon sphere of the spacetime and its stability.
In Sec.~\ref{sec:accretionproblem}, we formulate the general accretion problem of stationary and spherically symmetric accretion on the $D$ dimensional spacetime.
Also are the critical point and the sonic point defined.
In Sec.~\ref{sec:photongas}, we introduce the EOS of ideal photon gas in $d$ dimensional space.
Then the critical point of the ideal photon gas accretion in $D$
dimensions of spacetime is obtained.
In Sec.~\ref{sec:proof}, the main theorem is proved and conclusion is
given by Sec.~\ref{sec:conclusion}.

\section{The photon sphere}
\llaabel{sec:photonsphere}
A photon sphere is defined as a sphere on which circular null geodesics exist ~\cite{chandrasekhar}.
A photon sphere is said to be stable and unstable, if it has stable and 
unstable circular orbits, respectively. 
We present the following lemma for the photon sphere of the spacetime (\ref{eq:metric}).
\begin{lemma}\llaabel{lemma:photonsphere}
Let the metric be Eq.~(\ref{eq:metric}). 
The photon sphere of the spacetime is specified by the equation
\begin{equation}
\llaabel{eq2}
(fr^{-2})'=0.
\end{equation}
The stability condition of the photon sphere is given by
\begin{equation}
\llaabel{eq:spherestability}
stable\ (unstable) \Leftrightarrow (fr^{-2})''>0 \ (<0)
\end{equation}
at the radius of the photon sphere.
\end{lemma}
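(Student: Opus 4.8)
The plan is to reduce the null geodesic motion to an effective one-dimensional radial problem, to identify circular null orbits with extrema of an effective potential, and then to read off the stability from the curvature of that potential at the extremum.

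First I would use spherical symmetry to confine the geodesic to a two-dimensional plane, so that the relevant coordinates are $r$, $t$ and a single azimuthal angle $\phi$. Staticity and axial symmetry then supply two conserved quantities along any affinely parametrized geodesic, namely $E=f(r)\dot{t}$ and $L=r^2\dot{\phi}$, where the dot denotes differentiation with respect to the affine parameter. Imposing the null condition $ds^2=0$ and eliminating $\dot{t}$ and $\dot{\phi}$ in favor of $E$ and $L$ yields a first-order equation of the form
\begin{equation}
g(r)\,\dot{r}^2=\frac{E^2}{f(r)}-\frac{L^2}{r^2}.
\end{equation}
Since $0<f,g<\infty$, I can divide by the positive factor $g$ and rewrite this as $\dot{r}^2=(E^2/gf)\left(1-b^2\,fr^{-2}\right)$ with impact parameter $b:=L/E$, so that the combination $W(r):=fr^{-2}$ plays the role of an effective potential whose level set $W=b^{-2}$ fixes the radial turning points.

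Next I would characterize the circular orbits. A circular null geodesic requires simultaneously $\dot{r}=0$ and $\ddot{r}=0$. The first condition gives $W(r)=b^{-2}$, and differentiating $\dot{r}^2$ once shows that the second condition collapses, once the turning-point condition is used, to $W'(r)=0$: writing $\dot{r}^2=P(r)\,Q(r)$ with $P:=E^2/(gf)>0$ and $Q:=1-b^2 W$, the vanishing of $Q$ at the circular radius kills the term proportional to $P'$, leaving $P\,Q'=0$ and hence $W'=0$. This places the photon sphere exactly where $(fr^{-2})'=0$, which is Eq.~(\ref{eq2}).

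Finally, for the stability I would examine the sign of the radial ``acceleration'' near the circular radius $r_c$. Since both $Q(r_c)$ and $Q'(r_c)$ vanish, a second differentiation of $\dot{r}^2$ gives, for a small radial displacement $\epsilon$, an equation of the schematic form $\ddot{\epsilon}=-\tfrac12 b^2 P(r_c)\,W''(r_c)\,\epsilon$. Thus $W''>0$ yields oscillatory (bounded) perturbations, i.e. a stable orbit, while $W''<0$ yields runaway perturbations, i.e. an unstable orbit, which is precisely the criterion (\ref{eq:spherestability}). The main obstacle I anticipate is bookkeeping rather than conceptual: one must verify carefully that the metric function $g(r)$, together with the positive prefactor $E^2/f$, factors out cleanly in both the location and the perturbation analysis, so that neither the position nor the stability of the photon sphere depends on $g$. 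This relies on the vanishing of $Q$ and $Q'$ at $r_c$ to annihilate every term involving $P'$ and $g'$, and establishing it rigorously is the one place where the computation must be done with care.
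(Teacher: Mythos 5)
Your proof is correct and follows essentially the same route as the paper: both reduce the null geodesics to an effective radial problem governed by $fr^{-2}$, use the simultaneous vanishing of the effective potential and its first derivative at the circular radius to kill every term involving $g(r)$ and the prefactor, and then read off stability from the sign of $(fr^{-2})''$. The only cosmetic difference is that the paper works with $V(r)=-\frac{1}{2fg}\left[E^{2}-L^{2}fr^{-2}\right]$ and $F(r)=E^{2}-L^{2}fr^{-2}$ instead of your factorization $\dot{r}^{2}=P(r)Q(r)$, which is equivalent bookkeeping.
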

\begin{proof}
Consider a null geodesic $x^\mu=x^\mu(\lambda)$ confined in $\theta=\frac{\pi}{2}$ surface where $\lambda$ is the affine parameter.
The null condition leads to the equation
\begin{equation}
\llaabel{eq:hamiltonian}
\mathcal{H}=\frac{1}{2}g_{\mu\nu}\dot{x}^\mu\dot{x}^\nu=0
\end{equation}
for its Hamiltonian $\mathcal{H}$, where $\dot{ }=d/d\lambda$.
From two Killing vectors relevant to the motion,
\begin{equation}
\xi_{(t)}=\partial_t,\   \xi_{(\phi)}=\partial_\phi,
\end{equation}
we have two conserved quantities,
\begin{eqnarray}
E:&=&-g_{\mu\nu}\xi^\mu_{(t)}\dot{x}^\nu,\\
L:&=&g_{\mu\nu}\xi^\mu_{(\phi)}\dot{x}^\nu,
\end{eqnarray}
and the Hamiltonian reduces to
\begin{eqnarray}
\llaabel{eq:kplusv}
\mathcal{H}&=&\frac{1}{2}g\dot{r}^{2}-\frac{1}{2f}[E^{2}-L^{2}fr^{-2}]\nonumber\\
 &=&g\left[\frac{1}{2}\dot{r}^{2}+V(r)\right],
\end{eqnarray}
where
\begin{equation}
V(r):=-\frac{1}{2fg}[E^{2}-L^{2}fr^{-2}].
\end{equation}
Defining $F(r):=E^2-L^2fr^{-2}$, the conditions for the circular orbit are
\begin{eqnarray}
\dot{r}&=&0,\\
V'(r)&=&-\frac{1}{2}\left[\left(\frac{1}{fg}\right)'F+\frac{1}{fg}F'\right]=0.
\end{eqnarray}
The former gives $V(r)=0$ from 
Eq.~(\ref{eq:hamiltonian}) and so $F(r)=0$ from $1/(fg) \ne 0$.
Then the latter implies $F'(r)=0$ and the radius of the photon spheres is specified by the condition
\begin{equation}
(fr^{-2})'=0.
\end{equation}
Circular orbits are classified into two kinds, stable and unstable orbits.
They correspond to the conditions $V''(r)>0$ and $<0$, respectively, at the radius.
Using the fact $F=F'=0$ at the radius, we have
\begin{equation}
V''(r)=-\frac{1}{2}gf^{-1}F''(r).
\end{equation}
Thus the (in)stability is established by $(fr^{-2})''>0\ (<0)$
uniquely and we get Eq.~(\ref{eq:spherestability}).
\qedhere
\end{proof}
The conditions do not depend on the component $g(r)$ of the metric. 

\section{The accretion problem in $D$ dimensional spacetime and its critical point and sonic point}
\llaabel{sec:accretionproblem}
Here, assuming three conservation laws and the metric~(\ref{eq:metric}), the formulation of the accretion problem is given.
The definitions of the critical point and the sonic point are also given in the subsequent subsections.
\par
We assume three conservation equations, i.e., the first law, continuity equation and energy-momentum conservation with perfect fluid:
\begin{subnumcases}
{}
\llaabel{eq22a}
dh=Tds+n^{-1}dp\\
\llaabel{eq22b}
\nabla_\mu J^\mu=0\\
\llaabel{eq22c}
\nabla_\mu T^\mu_\nu=0,
\end{subnumcases}
where $J^\mu:=nu^\mu$ is the number current and $T^\mu_{\nu}=nhu^\mu
u_\nu+p\delta^\mu_\nu$ is the energy-momentum tensor of the perfect fluid.
The quantities $h,T,s,n,p$ and $u^\mu$ represent the enthalpy per particle, the temperature,
the entropy per particle, the number density, the pressure and the
4-velocity of the fluid, respectively.
The system of the equations means the adiabatic condition of the fluid 
through Eqs.~(\ref{eq22a}), (\ref{eq22b}) and (\ref{eq22c})
multiplied by $u^\nu$.
Furthermore, the stationary and spherically symmetric state of the flow
implies that the entropy is constant over the whole spacetime,
allowing us to write $h=h(p)$ or
\begin{equation}
\llaabel{eq24}
h=h(n).
\end{equation}
Integrating Eq.~(\ref{eq22b}), we have
\begin{equation}
\llaabel{eq25}
j_n:=4\pi(fg)^{1/2}r^{D-2}nu^r=const,
\end{equation}
from the symmetry of the fluid and the spacetime metric 
(\ref{eq:metric}).
The quantity $j_n$ represents the particle flux of the fluid.
The component, which is independent from $u^\mu$, of Eq. (\ref{eq22c}) is obtained
by 
multiplying it by the static Killing vector
$\xi_{(t)}^\nu:=\delta^\nu_t$. The integration of this component gives
\begin{equation}
\llaabel{eq26}
j_\epsilon:=4\pi(fg)^{1/2}r^{D-2}nhu^r\sqrt{f+fg(u^r)^2}=const,
\end{equation}
for the energy flux.
Combining Eqs.~(\ref{eq24}), (\ref{eq25}) and (\ref{eq26}), we get
\begin{eqnarray}
\left(\frac{j_\epsilon}{j_n}\right)^2&=&h^2\left[f+fg(u^r)^2\right]\\
&=&h^2(n)\left[f(r)+\frac{(j_n/4\pi)^2}{r^{2(D-2)}n^2}\right]=const.
\end{eqnarray}
Then, defining the constant $\mu:=j_n/4\pi$,
the problem is formulated into the algebraic equation:
\begin{equation}
\llaabel{eq23}
F_{\mu}^{(D)}(r,n):=h^2(n)\left[f(r)+\frac{\mu^2}{r^{2(D-2)}n^2}\right]=const.
\end{equation}
The physical meaning of $\mu$ is an accretion rate.
Given $\mu$, the function is specified and the constant on the RHS of Eq.~(\ref{eq23}) determines an accretion flow.
Note that this equation does not depend on the 
$rr$-component $g_{rr}$ of the metric.

\subsection{The critical point}
From the system (\ref{eq23}), the stationary accretion solutions are described as curves on
the phase space $(r,n)$.
These curves can be obtained by integrating the ordinary differential equation,
\begin{equation}
\llaabel{eq30}
\frac{d}{d\lambda}\left(\begin{array}{c} r\\ n \end{array}\right)
=\left(\begin{array}{r} \partial_n \\ -\partial_r \end{array}\right)F_\mu^{(D)}(r,n),
\end{equation}
as orbits with a parameter $\lambda$.
Then a notion of a {\it critical point} (or stationary point as in dynamical systems) at which the RHS of Eq.~(\ref{eq30}) vanishes arises
and its conditions are
\begin{subnumcases}
{}
\partial_nF_\mu^{(D)}=0\\
\partial_rF_\mu^{(D)}=0.
\end{subnumcases}
These are equivalent to
\begin{subnumcases}
{}
\llaabel{eq32a}
v_s^2\left(f+\frac{\mu^2}{r^{2(D-2)}n^2}\right)-\frac{\mu^2}{r^{2(D-2)}n^2}=0\\
\llaabel{eq32b}
f'-\frac{2(D-2)}{r}\frac{\mu^2}{r^{2(D-2)}n^2}=0,
\end{subnumcases}
respectively, where the sound speed $v_s=v_s(n)$ is defined by
\begin{equation}
v_s^2:=\frac{\partial \ln h}{\partial \ln n}.
\end{equation}
In the following, $(r_c,n_c)$ denotes the critical point.

\subsubsection{Types of critical points}
\llaabel{sec:types}
The linearization of Eq.~(\ref{eq30}) around a critical point allows us to classify the critical point into two types.
The one is a saddle point and the other is an extremum point.
A saddle point is a point, in this case, through which two solution orbits pass.
On the other hand, orbits in vicinity of an extremum point are closed curves around the point.
\par
The linearization matrix $M_c$ is given by
\begin{equation}
M_c:=\left(
	\begin{array}{cc}
	\partial_r\partial_n & \partial_n^2 \\
	-\partial_r^2 & -\partial_r\partial_n 
	\end{array}
	\right)F_\mu^{(D)}(r_c,n_c).
\end{equation}
This matrix, being real, $2\times 2$ and traceless, has two eigenvalues with opposite signs.
The subscript $c$ denotes the values at $(r_c, n_c)$.
If the determinant of the matrix is negative (positive), the eigenvalues
are real (pure imaginary).
As in dynamical systems, real eigenvalues imply that the critical point is a saddle point.
For imaginary eigenvalues, the orbits around the critical point are periodic in linear order.
However, because they are the contours of the real function $F_\mu^{(D)}(r,n)$,
the orbits must be closed loops.
Therefore the imaginary eigenvalues imply an extremum point.
We can write the determinant explicitly,
\begin{equation}
\det M_c=-\frac{2}{D-2}r_c(f'_c)^2\frac{h_c^4}{n_c^2}{\mathcal{F}_\mu^{(D)}}'(r_c),
\end{equation}
where
\begin{eqnarray}
\mathcal{F}_\mu^{(D)}(r)&:=&v_s^2(n_D(r))\left[1+2(D-2)a(r)\right]-1,\nonumber\\
n_D(r)&:=&\sqrt{\frac{D-2}{2}}\frac{2|\mu|}{\sqrt{r^{D+1}f'(r)}},\nonumber\\
a(r)&:=&\frac{f(r)}{rf'(r)}.\nonumber
\end{eqnarray}
Then we have a simple relation:
\begin{equation}
\llaabel{eq:saddleextremum}
saddle\ (extremum)\ point \Leftrightarrow {\mathcal{F}_\mu^{(D)}}'(r_c)>0\ (<0)
\end{equation}

\subsection{The sonic point}
Although a critical point is a
mathematical notion
defined on the phase space $(r,n)$ 
of a dynamical system, 
this also is closely related to a physical entity, a sonic point.
We define a sonic point and see its relation with a critical point in the following.
\subsubsection{The transonic flow and the sonic point}
In an accretion problem, one may expect that the fluid element at infinity,
which falls with small 3-velocity, becomes faster and faster as approaching the source of the gravity.
If the acceleration is sufficient, the velocity, initially smaller than its local sound speed $v_s$ (subsonic) at infinity, would become greater than $v_s$ (supersonic) at the point near the source.
Such a fluid flow is said to be {\it transonic} and here we call any flow which has both sub- and supersonic regions transonic.
Since, in our accretion problem, a fluid accretion flow is a solution orbit of Eq.~(\ref{eq23}),
we define a {\it sonic point} of a transonic flow as follows.
\begin{definition}
For a stationary and spherically symmetric accretion flow on the
 spacetime metric (\ref{eq:metric}), let $n=n(r)$ be its corresponding solution orbit on the phase space $(r,n)$.
Let $v=v(r)$ be the radial component of the 3-velocity of the fluid measured by static observers.
A sonic point $(r_s,n_s)$ of the accretion flow is defined as a point on the phase space satisfying the condition,
\begin{eqnarray}
\left.\frac{v^2}{v_s^2}\right|_{(r_s,n(r_s))}=1,
\end{eqnarray}
where $n_s=n(r_s)$.
\end{definition}
\subsubsection{The sonic point and the critical point}
The critical point mentioned above is closely related to the sonic point
and we present a lemma.
\begin{lemma}\llaabel{lemma:sonic-critical}
Assume the EOS of the fluid satisfies the condition
\begin{eqnarray}
\llaabel{eq:subluminal}
&0<v_s^2(n)<1,\\
\llaabel{eq:monoinc}
&\partial_n v_s^2(n)\ge0.
\end{eqnarray}
That is, the sound speed of the fluid is subluminal and monotonically increasing with respect to $n$.
For a physical transonic accretion flow in our accretion problem, its sonic point coincides with a critical point on
the phase space, which is a saddle point.
\end{lemma}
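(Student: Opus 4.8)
The plan is to translate the kinematic sonic condition $v^2=v_s^2$ into a statement about the derivatives of $F_\mu^{(D)}$, and then to read off both claims from the structure of the solution orbits as level sets of $F_\mu^{(D)}$. First I would express the static-observer radial 3-velocity $v$ in terms of the phase-space variables. Writing $w:=\mu^2/(r^{2(D-2)}n^2)$, the continuity integral Eq.~(\ref{eq25}) gives $u^r=\mu/[(fg)^{1/2}r^{D-2}n]$, so that $g(u^r)^2=w/f$; combining this with the normalization $g_{\mu\nu}u^\mu u^\nu=-1$ and the static observer $u^\mu_{\mathrm{obs}}=(f^{-1/2},0,\dots)$ yields the Lorentz factor $\gamma^2=f(u^t)^2=1+g(u^r)^2$ and hence $v^2=g(u^r)^2/(1+g(u^r)^2)=w/(f+w)$. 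Substituting this into the definition of the sonic point shows immediately that $v^2=v_s^2$ is equivalent to $v_s^2(f+w)-w=0$, i.e. to the first critical-point equation, Eq.~(\ref{eq32a}).

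Next I would differentiate $F_\mu^{(D)}$. Using $\partial_n w=-2w/n$ and $v_s^2=n\,d\ln h/dn$, a short computation gives
\[
\partial_n F_\mu^{(D)}=\frac{2h^2(f+w)}{n}\left(v_s^2-v^2\right),
\]
so that $\mathrm{sign}(\partial_n F_\mu^{(D)})=\mathrm{sign}(v_s^2-v^2)$; in particular $\partial_n F_\mu^{(D)}=0$ holds precisely at the sonic point, consistent with the identification above. The physical-flow input enters here: along a solution orbit $F_\mu^{(D)}(r,n(r))=\mathrm{const}$ one has $dn/dr=-\partial_r F_\mu^{(D)}/\partial_n F_\mu^{(D)}$, and for a flow that passes smoothly through the sonic point as a single-valued function $n(r)$ the slope must remain finite. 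Since $\partial_n F_\mu^{(D)}\to0$ at the sonic point, finiteness forces $\partial_r F_\mu^{(D)}=0$ there, which is exactly the second critical-point equation, Eq.~(\ref{eq32b}). Hence the sonic point of any physical flow is a critical point.

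It remains to show the critical point is a saddle and not an extremum. Here I would invoke the classification of Sec.~\ref{sec:types}, in which a saddle is the case where solution orbits pass through the point ($\det M_c<0$), whereas an extremum has only closed loops in its neighborhood ($\det M_c>0$). The EOS hypotheses are the levers. Because $\partial_n v^2=f\,\partial_n w/(f+w)^2<0$ while $\partial_n v_s^2\ge0$, the quantity $v_s^2-v^2$ is strictly increasing in $n$ at fixed $r$; this makes the sonic value of $n$ unique for each $r$, and, evaluated where $v_s^2=v^2$, gives $\partial_n^2 F_\mu^{(D)}=\frac{2h^2(f+w)}{n}\,\partial_n(v_s^2-v^2)>0$, so the point cannot be a local maximum of $F_\mu^{(D)}$. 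The remaining extremum possibility (a local minimum) is excluded by the transonic requirement itself: closed level loops are double-valued in $r$ and cannot represent a single-valued accretion flow that genuinely crosses from the subsonic to the supersonic branch, whereas the separatrix through a saddle does precisely that. Therefore a physical transonic flow can only thread a saddle.

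The main obstacle is this last step: making precise what ``physical'' excludes and arguing rigorously that single-valuedness together with an honest sub-to-supersonic transition forces the saddle topology rather than merely being compatible with it. The subluminal bound $0<v_s^2<1$ and the monotonicity $\partial_n v_s^2\ge0$ must be shown to do the work claimed for them—the former guaranteeing that the sonic condition $v^2=v_s^2<1$ is attainable since $v^2<1$ always holds, and the latter securing both the uniqueness of the sonic $n$ and the sign of $\partial_n^2 F_\mu^{(D)}$—while one is careful that these local statements combine into the global conclusion without smuggling in extra assumptions on the metric function $f(r)$.
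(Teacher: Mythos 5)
Your proposal is correct and follows essentially the same route as the paper: identify the sonic condition with $\partial_n F_\mu^{(D)}=0$ along the orbit, use finiteness of $dn/dr$ (the ``physical'' requirement) to force $\partial_r F_\mu^{(D)}=0$ as well, and conclude the saddle type from the fact that an extremum point admits only closed level loops, which cannot carry a transonic orbit. The only cosmetic difference is that you obtain the first equivalence from the direct identity $\partial_n F_\mu^{(D)}\propto v_s^2-v^2$, whereas the paper introduces the dividing curve $\tilde{n}(r)$ and proves its single-valuedness via the function $\mathcal{N}(n)$; your extra computation of $\partial_n^2 F_\mu^{(D)}>0$ is harmless but not needed once the topological exclusion of extrema is in place.
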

\begin{proof}
For the flow, the radial component of its 3-velocity $v(r)$ observed by static observers is given by
\begin{equation}
u^\mu\partial_\mu=\frac{1}{\sqrt{1-v^2}}(e_0+ve_1), 
\end{equation}
where $e_0:=f^{-1/2}\partial_t$ is the observers' 4-velocity and $e_1:=g^{-1/2}\partial_r$ is
a unit radial vector orthogonal to it.
Then, we have
\begin{equation}
v^2(r)=\frac{\mu^2}{\mu^2+f(r)r^{2(D-2)}n^2(r)}
\end{equation}
along the orbit $n=n(r)$ using Eq.~(\ref{eq25}), $-1=u^\mu u_\mu$ and $\mu=j_n/4\pi$.
On the other hand, letting $n=\tilde{n}(r)$ be a curve satisfying the condition $\partial_nF_\mu^{(D)}=0$, or equivalently Eq.~(\ref{eq32a}), we have the relation
\begin{equation}
\llaabel{eq:vsntilde}
v_s^2(\tilde{n}(r))=\frac{\mu^2}{\mu^2+f(r)r^{2(D-2)}\tilde{n}^2(r)}
\end{equation}
for the sound speed $v_s$.
From the two equations above and the assumption $\partial_n v_s^2\ge0$,
if $n(r_0)>\tilde{n}(r_0)$ for radius $r=r_0$, 
$v^2(r_0)<v_s^2(\tilde{n}(r_0))\le v_s^2(n(r_0))$, i.e., subsonic.
In the same way, the flow is supersonic at the radius if
 $n(r_0)<\tilde{n}(r_0)$. This means
that the curve $n=\tilde{n}(r)$ divides the phase space into subsonic and supersonic region and
the sonic point must be the point at which the orbit $n=n(r)$ and the curve $n=\tilde{n}(r)$ cross each other.
(Conversely, such a crossing point must be the sonic point of the flow.)
However, if $\partial_rF_\mu^{(D)}\neq0$ at the crossing point, such an
 orbit typically gets 2-valued (so unphysical)
at least locally because
$dn/dr=\partial_rF_\mu^{(D)}/\partial_nF_\mu^{(D)}=\pm\infty$ there from
 Eq.~$(\ref{eq30})$. 
In the current paper, we require $|dn/dr|<\infty$ as 
one of the conditions of a physical flow.
Then, it is said that physically acceptable transonic orbits cross the curve of $\partial_nF_\mu^{(D)}=0$ only at a critical point and so the sonic point coincides with the critical point.
Furthermore, according to the discussion in Sec.~ \ref{sec:types}, the critical point is a saddle point because orbits can pass the point.
Finally, we must show that the function $\tilde{n}(r)$ is indeed single-valued.
We can separate Eq.~(\ref{eq:vsntilde}) into a function of $\tilde{n}$ and the rest,
\begin{eqnarray}
&\mathcal{N}\left(\tilde{n}(r)\right)=\mu^{-2}f(r)r^{2(D-2)},\nonumber\\
&\mbox{where}~~
\mathcal{N}(n):=\left(v_s^{-2}(n)-1\right)n^{-2}.
\end{eqnarray}
The conditions (\ref{eq:subluminal}) and (\ref{eq:monoinc}) imply
$\partial_n\mathcal{N}(n)<0$, $\mathcal{N}(n)\to0\ (n\to\infty)$ and $\mathcal{N}(n)\to\infty\ (n\to0)$.
Therefore, the inverse function $\mathcal{N}^{-1}:(0,\infty)\to(0,\infty)$ exists and
$\tilde{n}(r)$ can be expressed as a single-valued function,
\begin{equation}
\tilde{n}(r)=\mathcal{N}^{-1}\left(\mu^{-2}f(r)r^{2(D-2)}\right).
\end{equation}
\end{proof}

\section{The photon gas accretion and its critical point}
\llaabel{sec:photongas}
In this section, we will construct the accretion problem of ideal photon
gas in $D$ dimensions
and find the condition of its critical point based on discussions in the previous section.

\subsection{The EOS of ideal photon gas in $d$ dimensional space}
To formulate the accretion of ideal photon gas in $D$ dimensions, we must know its equation of state
at first.
Here we construct the EOS.
\par
From the discussion of black body radiation in a $d$ dimensional space, we have a relation
\begin{equation}
pV=\frac{1}{d}U,
\end{equation}
where the thermodynamical variables $p, V$ and $U$ are the pressure, the volume and the energy of a system, respectively.
This relation gives
\begin{equation}
\left(\frac{\partial U}{\partial V}\right)_S\equiv-p=-\frac{1}{d}\frac{U}{V},
\end{equation}
where $S$ denotes the entropy.
Integrating the both sides concerning $U$ and $V$
\begin{equation}
UV^{1/d}=C(S),
\end{equation}
with the function $C(S)$ being an arbitrary function.
Then, the enthalpy $H$ of the black body radiation is
\begin{equation}
H=U+pV=\frac{d+1}{d}U\propto V^{-1/d}.
\end{equation}
Note that the proportionality coefficient of the last equality can depend on the entropy $S$.
Comparing this result with the usual convention of a polytrope index in the expression per particle, we conclude that the EOS of ideal photon gas is
\begin{equation}
\llaabel{eq37}
h=\frac{k\gamma}{\gamma-1}n^{\gamma-1}
\end{equation}
with
\begin{equation}
\llaabel{eq:gamma}
\gamma=\frac{d+1}{d}
\end{equation}
and $k$ is an arbitrary function of the entropy.
It can be revealed that the quantity $k$ is a constant constructed by
the Planck constant and a numerical factor by the argument about photon gas from statistical mechanics.
However, the explicit form of $k$ is not relevant to the proof of the theorem. Since the entropy of the fluid is constant over the spacetime, $k$ is also constant.
This is relevant to the proof.

\subsection{The critical point of photon gas accretion}
\begin{lemma}\llaabel{lemma:photongascritical}
For the accretion of ideal photon gas in our accretion problem,
 the
radius $r_c$ of a critical point
is specified by
\begin{equation}
\llaabel{eq45}
(fr^{-2})'=0
\end{equation}
and the corresponding critical density $n_c$ is
\begin{equation}
n_c=\sqrt{\frac{D-2}{f(r_c)}}\frac{\left|\mu\right|}{r_c^{D-2}}.
\end{equation}
The type of the critical point is classified by the equation
\begin{equation}
saddle\ point\ (extremum\ point)\ \Leftrightarrow (fr^{-2})''<0\ (>0)
\end{equation}
at the radius.
\end{lemma}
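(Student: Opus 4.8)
The plan is to exploit the distinctive feature of the ideal photon gas, namely that its sound speed is a constant independent of $n$. First I would read off $v_s^2$ directly from the EOS~(\ref{eq37}): since $\ln h = \mathrm{const} + (\gamma-1)\ln n$, the definition $v_s^2 = \partial\ln h/\partial\ln n$ gives $v_s^2 = \gamma-1$. With Eq.~(\ref{eq:gamma}) and the spatial dimension $d = D-1$, this is the constant $v_s^2 = 1/d = 1/(D-1)$, reproducing the familiar value $1/3$ for radiation in four-dimensional spacetime.

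Because $v_s^2$ is constant, the critical-point conditions (\ref{eq32a}) and (\ref{eq32b}) reduce to purely algebraic relations. From (\ref{eq32a}) I would solve for the combination $\mu^2/(r^{2(D-2)}n^2)$, obtaining $\mu^2/(r^{2(D-2)}n^2) = \frac{v_s^2}{1-v_s^2}\,f = \frac{f}{D-2}$, the last equality using $v_s^2 = 1/(D-1)$. Substituting this into (\ref{eq32b}) eliminates $n$ and leaves $f' = 2f/r$, which is exactly the photon-sphere condition $(fr^{-2})'=0$ from Lemma~\ref{lemma:photonsphere}; this establishes Eq.~(\ref{eq45}). Solving the relation $\mu^2/(r^{2(D-2)}n^2) = f/(D-2)$ for $n$ at $r = r_c$ then gives $n_c = \sqrt{(D-2)/f(r_c)}\,|\mu|/r_c^{D-2}$.

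For the classification I would insert the constant $v_s^2$ into the discriminant $\mathcal{F}_\mu^{(D)}$. Since $v_s^2(n_D(r))$ loses its $r$-dependence, $\mathcal{F}_\mu^{(D)}$ collapses to $\frac{D-2}{D-1}\bigl[2a(r)-1\bigr]$ with $a(r) = f/(rf')$, so ${\mathcal{F}_\mu^{(D)}}'(r_c)$ carries the same sign as $a'(r_c)$. The remaining task is to rewrite $a'(r_c)$ in terms of $(fr^{-2})''$. Differentiating $a = f/(rf')$ and then imposing the critical relation $f' = 2f/r$ (equivalently $rf' = 2f$) collapses the result to $a'(r_c) = -\frac{r_c^3}{4f(r_c)}(fr^{-2})''|_{r_c}$; since $f,r>0$, the sign of $a'(r_c)$ is opposite to that of $(fr^{-2})''$. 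Combining this with the criterion~(\ref{eq:saddleextremum}), saddle $\Leftrightarrow {\mathcal{F}_\mu^{(D)}}'(r_c)>0$, yields the stated dictionary: saddle (extremum) point $\Leftrightarrow (fr^{-2})''<0\ (>0)$ at $r_c$.

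I expect the final step to be the main obstacle. The raw derivative of $a(r)$ mixes $f$, $f'$, and $f''$ in an unwieldy combination, and only after substituting the critical condition $f' = 2f/r$ do the $f'$- and $f''$-terms reorganize into a clean multiple of $(fr^{-2})''$. Tracking the powers of $r$ and the overall positive prefactor is the delicate bookkeeping responsible for the sign flip between $a'(r_c)$ and $(fr^{-2})''$, and hence for the reversal between the two entries of the classification.
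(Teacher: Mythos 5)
Your proposal is correct and follows essentially the same route as the paper: compute the constant sound speed $v_s^2=\gamma-1=1/(D-1)$, substitute into the critical-point conditions to eliminate $n$ and obtain $rf'=2f$ together with the stated $n_c$, and then evaluate the sign of ${\mathcal{F}_\mu^{(D)}}'$ at the critical radius (your expression $a'(r_c)=-\tfrac{r_c^3}{4f(r_c)}(fr^{-2})''|_{r_c}$ reproduces exactly the paper's ${\mathcal{F}_\mu^{(D)}}'=-2(\gamma-1)(D-2)rf(f')^{-2}(fr^{-2})''$). The only differences are cosmetic choices of intermediate algebra.
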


\begin{proof}
The condition for a critical point $(\ref{eq32a})$, $(\ref{eq32b})$ can be transformed to
\begin{subnumcases}
{}
\llaabel{eq46a}
v_s^2\left[2(D-2)f+rf'\right]-rf'=0\\
\llaabel{eq46b}
f'-\frac{2(D-2)}{r}\frac{\mu^2}{r^{2(D-2)}n^2}=0.
\end{subnumcases}
From Eq.~$(\ref{eq37})$, the sound speed of ideal photon gas is constant,
\begin{equation}
v_s^2=\frac{\partial \ln h}{\partial \ln n}=\gamma-1.
\end{equation}
Substituting this, Eq.~$(\ref{eq46a})$ determines the position $r$ of the
 critical point,
\begin{equation}
0=(\gamma-2)r^3(fr^{-2})',
\end{equation}
where the formula $(\ref{eq:gamma})$ and the fact $d=D-1$ are used in the last equality.
The corresponding number density at the critical point is uniquely obtained
from Eq.~$(\ref{eq46b})$ using the relation $f'(r_c)=2f(r_c)/r_c$.
The condition for a saddle or extremum point was given in \ref{sec:types}.
In this case, the value of the function ${\mathcal{F}_\mu^{(D)}}'$ at a
 critical point is written in the form,
\begin{equation}
{\mathcal{F}_\mu^{(D)}}'=-2(\gamma-1)(D-2)rf(f')^{-2}(fr^{-2})'',
\end{equation}
where the fact $(fr^{-2})'=0$ at a critical point is used.
Clearly, the sign of the value ${\mathcal{F}_\mu^{(D)}}'$ is opposite to $(fr^{-2})''$ and the proof has been done.
\end{proof}

Note that the sound speed $v_s$ satisfies the subluminal condition $0<v_s^2<1$ and the monotonically increasing condition $\partial_nv^2_s\ge0$ from
$v_s^2=\gamma-1=1/d$ and $d=D-1\ge2$.

\section{The proof of Theorem: The correspondence among the points}
\llaabel{sec:proof}
In this section, we see the correspondence among the three objects; the photon sphere, the critical point and the sonic point of our ideal photon gas accretion problem and complete the proof of the main theorem.
\par
From Lemma \ref{lemma:photonsphere} and \ref{lemma:photongascritical}, we can establish
the following corollary about the correspondence between the 
photon spheres and the critical points of ideal photon gas accretion.
\begin{corollary}
\llaabel{corollary:spherecritical}
If the spacetime has photon spheres, there exists a critical point of the same radius for each of the spheres.
Furthermore, for an unstable photon sphere, the critical point on the same radius is always a saddle point while for a stable one, the corresponding critical point is an extremum point.
\end{corollary}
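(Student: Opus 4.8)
The plan is to derive the corollary by directly juxtaposing Lemma~\ref{lemma:photonsphere} and Lemma~\ref{lemma:photongascritical}, exploiting the fact that both state their essential conditions through the single function $fr^{-2}$. No fresh computation is needed: the analytic work has already been absorbed into the two lemmas, and what remains is to read off the coincidence.

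First I would settle the existence claim. Lemma~\ref{lemma:photonsphere} places every photon sphere at a radius obeying $(fr^{-2})'=0$, while Lemma~\ref{lemma:photongascritical} places the radius $r_c$ of every critical point of the ideal photon gas accretion at the \emph{same} equation $(fr^{-2})'=0$. Consequently any radius carrying a photon sphere automatically meets the critical-radius condition, and Lemma~\ref{lemma:photongascritical} then furnishes the accompanying critical density $n_c=\sqrt{(D-2)/f(r_c)}\,|\mu|/r_c^{D-2}$ uniquely. This yields exactly one critical point, of matching radius, for each photon sphere.

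Second I would establish the type correspondence by comparing the two sign criteria evaluated at that common radius. Lemma~\ref{lemma:photonsphere} reads unstable $\Leftrightarrow (fr^{-2})''<0$ and stable $\Leftrightarrow (fr^{-2})''>0$, whereas Lemma~\ref{lemma:photongascritical} reads saddle $\Leftrightarrow (fr^{-2})''<0$ and extremum $\Leftrightarrow (fr^{-2})''>0$. Since both classifications hinge on the identical quantity $(fr^{-2})''$, an unstable sphere necessarily forces a saddle point and a stable sphere necessarily forces an extremum point, which is the assertion.

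The step that most warrants care is confirming that the critical density $n_c$ is genuinely well-defined and positive, because its formula contains $1/\sqrt{f(r_c)}$ and the derivation of Lemma~\ref{lemma:photongascritical} relied on $f'(r_c)=2f(r_c)/r_c$. Both are secured by expanding $(fr^{-2})'=r^{-3}(rf'-2f)$, so that $(fr^{-2})'=0$ together with the standing hypotheses $f>0$ and $r>0$ forces $f'(r_c)=2f(r_c)/r_c>0$; in particular $f(r_c)>0$ keeps the square root real and $n_c$ finite. The sole genuine gap is the marginal case $(fr^{-2})''=0$, which neither lemma classifies and which the corollary tacitly sets aside.
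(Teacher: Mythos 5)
Your proposal is correct and follows essentially the same route as the paper, which states the corollary as an immediate consequence of Lemma~\ref{lemma:photonsphere} and Lemma~\ref{lemma:photongascritical} by matching the common radius condition $(fr^{-2})'=0$ and the opposite-sign criteria on $(fr^{-2})''$. Your additional check that $f'(r_c)=2f(r_c)/r_c>0$ makes $n_c$ well-defined is a sensible (if implicit in the paper) verification, and your remark about the unclassified marginal case $(fr^{-2})''=0$ is accurate.
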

The critical point radius $r_c$ depends on $\mu$ in general.
However, the photon gas accretion is interesting in the sense that its critical radius does not depend on $\mu$ and this fact is responsible for the correspondence.
\par
Ideal photon gas satisfies the condition of EOS in Lemma \ref{lemma:sonic-critical}
since $v_s^2(n)=\gamma-1$ and the lemma can be applied.
Then we have the corollary about the relation between critical points and sonic points.
\begin{corollary}
\llaabel{corollary:criticalsonic}
For any physical transonic accretion flow of the ideal photon gas accretion, its sonic point is a critical and saddle point.
\end{corollary}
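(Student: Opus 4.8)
The plan is to obtain this corollary as an immediate consequence of Lemma~\ref{lemma:sonic-critical}. That lemma already asserts, for any equation of state whose sound speed is subluminal and non-decreasing in $n$, that the sonic point of a physical transonic flow coincides with a critical point which is a saddle point. Hence the entire task reduces to verifying that the ideal photon gas EOS~(\ref{eq37}) meets the two hypotheses (\ref{eq:subluminal}) and (\ref{eq:monoinc}) of that lemma.

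First I would invoke the sound speed computed in the proof of Lemma~\ref{lemma:photongascritical}, namely the constant value
\begin{equation}
v_s^2 = \gamma - 1 = \frac{1}{d} = \frac{1}{D-1}.
\end{equation}
Because the metric~(\ref{eq:metric}) is taken with $D \ge 3$, the spatial dimension satisfies $d = D-1 \ge 2$, so $0 < v_s^2 = 1/d \le 1/2 < 1$; this is exactly the subluminal bound (\ref{eq:subluminal}). Next, since $v_s^2$ carries no dependence on the number density, its derivative vanishes identically, $\partial_n v_s^2 = 0$, which satisfies the monotonicity requirement (\ref{eq:monoinc}) as an equality.

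With both hypotheses in force, I would apply Lemma~\ref{lemma:sonic-critical} directly to conclude that the sonic point of any physical transonic ideal photon gas flow is a critical point, and that this critical point is a saddle. There is no substantive obstacle here; the only subtlety worth flagging is that it is the \emph{constancy} of $v_s^2$, rather than a strict increase, that fulfils (\ref{eq:monoinc}), so one relies on the fact that Lemma~\ref{lemma:sonic-critical} was phrased with the non-strict inequality $\partial_n v_s^2 \ge 0$ precisely so as to admit the photon gas as a limiting case.
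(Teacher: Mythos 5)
Your proposal is correct and follows exactly the paper's route: the corollary is obtained by verifying that the photon gas sound speed $v_s^2=\gamma-1=1/(D-1)$ satisfies the subluminality and monotonicity hypotheses (the latter as an equality) and then applying Lemma~\ref{lemma:sonic-critical}. The paper makes the same observation, including the remark that $d=D-1\ge2$ guarantees $0<v_s^2<1$.
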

Then, the above two corollaries complete the proof of Theorem.

\section{Conclusion}
\llaabel{sec:conclusion}
In this work, first we derived the conditions for photon spheres, the radius and the stability of the corresponding circular orbit of null geodesics.
Next, we generalize the accretion analysis given by E. Chaverra and O. Sarbach~\cite{chaverra}
to arbitrary dimensions and discussed the relation between sonic points and critical points in general.
Then, for ideal photon gas, it was shown that radius of a sonic point always coincides with (one of) photon spheres for physical solutions of the accretion problem.
\par
We can say that a photon sphere is indeed special even for the radial accretion
because the flow can be interpreted as a set of geodesic motions of photons
and some of the photons must go round on the sphere.
However, the sound speed and the fluid velocity are macroscopic quantities.
The reason for the correspondence is not yet so
clear. 
Since the correspondence seems to originate from the
microscopic construction of radiation fluid, we conjecture that the
correspondence will be seen in more general situations, such as 
axially symmetric steady-state accretion flows onto stationarily 
rotating black holes.
\par
The correspondence can be broken if the effects of the back reaction is included.
However, since photon spheres are usually located near the source of
gravity, it would be justified to neglect the self-gravity of the fluid
and the correspondence still holds in that case. 
\par
 It should be noted that the present discussion applies not only to accretion but also to outflow or stellar wind as long as it is steady-state and spherically symmetric.

\begin{acknowledgments}
We thank T. Igata, M. Patil, T. Kokubu, K. Ogasawara,
 S. Kinoshita, H. Maeda and G.W.~Gibbons for their very helpful
 discussions and comments.
The authors are grateful to the anonymous referee
for his/her valuable comments that improved the
manuscript.
TH was supported by JSPS KAKENHI Grant Number 26400282.
\end{acknowledgments}

%
\bibliography{manuscriptsonicphoton}

\end{document}